\newtheorem{proof}{Proof}
\begin{document}
\begin{frontmatter}

\title{Local module identification in dynamic networks: do more inputs guarantee smaller variance? \thanksref{footnoteinfo}
  }
% Title, preferably not more than 10 words.

\thanks[footnoteinfo]{First version completed on 21 December 2017. Revised 27 April 2018. This work has received funding from the European Research Council (ERC), Advanced Research Grant SYSDYNET, under the European Union’s Horizon 2020 research and innovation programme (grant agreement No 694504).}

\author[First]{M. Mohsin Siraj},
\author[Second]{M.G. Potters} and
\author[First]{Paul M.J. Van den Hof}

\address[First]{Department of Electrical Engineering, 
Eindhoven University of Technology (e-mails: m.m.siraj, p.m.j.vandenhof@tue.nl).}
\address[Second]{Afdeling VolkerData, VolkerRail, Lange Dreef 7, 3131 NJ Vianen (e-mail: max.potters@volkerrail.nl)}

\begin{abstract}                % Abstract of not more than 250 words.
Recent developments in science and engineering have motivated control systems to be considered as interconnected and networked systems. From a system identification point of view, modelling of a local module in such a structured system is a relevant and interesting problem. This work focuses on the quality, in terms of variance, of an estimate of a local module. We analyse which predictor input signals are relevant and contribute to variance reduction,  while still guaranteeing the consistency of the estimate. For a targeted local module, a comparison of its estimate variance is made between a full-MISO approach and an immersed network setting, where a reduced number of inputs is used, while still guaranteeing consistency. A case study of a four-node network is considered and it is shown that a smaller set of predictor inputs can, under some conditions, result in a smaller variance compared to the full-MISO approach.
\end{abstract}

\begin{keyword}
System identification, dynamic networks, variance analysis, immerse network, direct identification
\end{keyword}

\end{frontmatter}
%===============================================================================

\section{Introduction}
Data-driven modelling of structured interconnections of dynamic systems (modules), also known as dynamic networks, is gaining considerable attention, see e.g., \cite{gonccalves2008necessary}, \cite{materassi2010topological}, \cite{chiuso2012bayesian}. These interconnected dynamic networks can be found in many scientific and engineering fields such as power systems, biological systems, etc. One of the challenging problems in modelling of these dynamic networks is the identification of a local module when the structure of the network (topology) is known.

In \cite{van2013identification} the closed-loop prediction error framework, including direct and two-stage identification methods, has been extended to the case of dynamic networks and an analysis in terms of consistency properties has been presented. For the case of a known structure in the network, the situation has been considered that \textit{all} signals that directly map into the output node of the target module are taken into account as predictor inputs (full-MISO approach).
This condition is relaxed in \cite{dankers2016identification}, where an \textit{immersed network} is formed using graph theory tools, resulting in a reduced number of signals to be used as predictor inputs, to achieve consistency of a targeted module estimate.

The selection of inputs not only affects the consistency of the estimate, but also strongly influences its variance. For a large network with many possible sets of predictor inputs, it becomes relevant to analyse which predictor input set results in the smallest variance. In \cite{gevers2006identification}, a variance analyses of a multi-input single-output system with independent inputs (no correlation) is discussed and it has been shown that a change of the experiment by adding an input signal either increases the accuracy of the estimate or at least does not adversely affect it. In \cite{ramazi2014variance}, a variance analysis is provided for spatially correlated inputs in a MISO system. For interconnected systems, a variance analysis in a cascade structure has been studied e.g., in \cite{wahlberg2009variance}, \cite{everitt2017module}. A variance reduction technique in dynamic networks has been presented in \cite{Gunes14}. %Optimal input design in  networks for a few inputs have been presented in \cite{BoHagg15}.

This work focusses on variance analysis in dynamic networks. We consider the identification and dynamic network setup of (\cite{van2013identification}). Specifically, we compare the variance of a targeted module estimate of the full-MISO approach to that of its immersed network counterpart (\cite{dankers2016identification}), which has a reduced number of inputs but increased complexity of some of its modules. Our research contrasts the variance analysis as done e.g., in \cite{gevers2006identification} in the sense that we keep the experiment fixed, but vary over the choice of predictor input signals out of the set of available signals, thereby focussing on the choice of available signals in the  \textit{modelling} process. We consider the research question: what is the benefit of using an additional predictor input signal on the variance of a target module of interest in a dynamic network with known topology?
%In other words, we investigate which information (inputs nodes) in the dynamic network is most relevant for %the identification of a targeted module, i.e., leads to the lowest variance of the module estimate.
We also investigate if it is always preferable to include all information (inputs) in the modelling process. To this end, we consider different modelling processes with various (possible) sets of inputs while the experiment setup remains the same.

Our analysis is primarily based on the observations that for the immersed network settings, with a reduced number of inputs, the number of to-be-identified transfer functions is smaller compared to the full-MISO approach. This may result in a smaller variance of a targeted module in the immersed network settings. On the other hand, process noise at an input node, when that input node is removed from the set of predictor inputs, affects the output directly and can deteriorate the signal-to-noise ratio at the output node. In the full-MISO approach the process noise can play the role of an excitation signal for the predictor input node. Therefore, we investigate the tradeoff between the number and complexity of to-be-estimated transfer functions and the  signal-to-noise ratio in the immersed network.

Our approach is to use the parameter covariance (\cite{ljung1998system}) and the asymptotic frequency response variance expressions (\cite{ljung1985asymptotic}) for a targeted module in both the full MISO and the immersed network settings. We derive conditions under which the variance of the targeted module in the immersed network setting is higher compared to the full-MISO approach. Other approaches and measures to analyse variance, such as the geometric approach (\cite{hjalmarsson2011geometric}), frequency response variance expression for finite order and sample size (\cite{hjalmarsson2004exact}), or transfer function variance expression for finite order (\cite{ninness2004variance}) may also be used, but are not considered here. In a case study, we consider a four-node dynamic network and show how the signal-to-noise ratio and the number/complexity of the to-be-identified transfer functions affect the quality of estimate in terms of variance.

The paper is organised as follows: in the next section, preliminaries such as the definition of the dynamic network and variance measures are discussed. The main theoretical results for the variance analysis are given in section \ref{sec:variance_analysis}. Later, in section \ref{sec:case_study}, a case study with a three-node dynamic network is presented. Conclusions of the work are drawn afterwards.

\section{Preliminaries and system setup}
\subsection{Dynamic network}
We consider the dynamic network setup as given in \cite{van2013identification} and \cite{dankers2016identification}. The network is built up of $L$ nodes, related to $L$ scalar internal variables $w_j, j=1,\cdots,L$. It is assumed that each internal variable can be written as:
\begin{align} \label{eq:original_network}
  w_j(t) &= \sum_{k \in \mathcal{N}_j} G_{jk}^0(q) w_k(t) + r_j(t) + v_j(t),
\end{align}
where $G_{jk}^0(q), k \in \mathcal{N}_j$ is a proper rational transfer function, $q^{-1}$ is the delay operator, i.e., $q^{-1}u(t) = u(t-1)$ and,
\begin{itemize}
  \item $\mathcal{N}_j$ is the set of indices of internal variables with direct causal connection to $w_j, i.e., i \in \mathcal{N}_j$ iff $G_{ji}^0 \neq 0$;
  \item $v_j$ is an unmeasured disturbance variable with rational spectral density: $v_j = H_j^{0}(q)e_j$, where $e_j$ is a white noise process, and $H_j^{0}$ is a monic, stable, and minimum phase transfer function;
  \item $r_j$ is an external variable that is known and, if it is present at node $j$, it can be manipulated by the user.
\end{itemize}
In the matrix form, all measured variables can be written as:
\begin{align*}
  \left(
    \begin{smallmatrix}
      w_1 \\
      w_2 \\
      \vdots \\
      w_L \\
    \end{smallmatrix}
  \right) &=
  \left(
    \begin{smallmatrix}
      0 & G_{12}^0 & \cdots & G_{1L}^0 \\
      G_{21}^0 & 0 & \ddots & G_{2L}^0 \\
      \vdots & \ddots & \ddots & \vdots \\
      G_{L1}^0 & G_{L2}^0 & \cdots & 0 \\
    \end{smallmatrix}%{cccc}
    %\end{array}
  \right)
  \left(
  \begin{smallmatrix}
    w_1 \\
    w_2 \\
    \vdots \\
    w_L
  \end{smallmatrix} \right) +
  \left(
  \begin{smallmatrix}
      r_1 \\
    r_2 \\
    \vdots \\
    r_L
  \end{smallmatrix}\right) +
   \left(
   \begin{smallmatrix}
      v_1 \\
    v_2 \\
    \vdots \\
    v_L
  \end{smallmatrix}\right) \\
  &= G^0w + r + v = (I - G^0)^{-1}(r+v).
\end{align*}
In \cite{van2013identification}, it is shown that for a consistent estimate of a local module in a dynamic network $G_{jk}^0$, it is sufficient to consider all inputs $k \in \mathcal{N}_j$ in the set of predictor inputs. This condition is relaxed in the immersed network setting (\cite{dankers2016identification}), where consistency is achieved with a reduced number of signals as predictor inputs. The consistent estimate of $G_{jk}^0$ is possible if:
\begin{itemize}
  \item $w_k$ is included as predictor input;
  \item each parallel path from $w_k \rightarrow w_j$ passes through a node is chosen as predictor input; and
  \item each loop from $w_j \rightarrow w_j$ passes through a node is also chosen as predictor input,
\end{itemize}
while for the so-called direct method of identification a condition has to be added on the absence of confounding variables.

In the immersed network settings the new model identification equation becomes:
\begin{align} \label{eq:immersed_network_tf}
  w_j(t) &= \sum_{k \in \mathcal{D}_j} \breve{G}_{jk}(q) w_k(t) + r_j + \breve{v}_j,
\end{align}
where $\breve{\cdot}$ indicates terms corresponding to the immersed network, $\mathcal{D}_j$ denote the set of indices of the internal variables chosen as predictor inputs, $\breve{G}_{jk}$ are identified models in immersed network and they can be different from $G_{jk}^0$, except for the targeted module and the noise $\breve v_j(t)$ can be defined as $\breve v_j(t) = v_j(t) + \sum_{k \in N_j \backslash D_j} G_{jk}w_k(t)$. Let, for a targeted local module $G_{j1}^0$, $M\in\mathbb{N}$  be the number of inputs that can be removed from the predictor input set, while still guaranteeing consistency, resulting in $L_{j}-M$ inputs in immersed settings, wherein $L_j$ is the number of nodes connected to node $j$.

\subsection{Asymptotic transfer function covariance expression}
In \cite{ljung1985asymptotic}, an expression for the estimated black-box transfer function, being asymptotic in both the number of to-be-estimated parameters of the transfer function $n$ and data points $N$, is derived and reads:
\begin{align}\label{eq:tf_expression}
  \text{cov} \left(\begin{array}{c}
       \hat{G}_N(e^{i\omega},n) \\
       \hat{H}_N(e^{i\omega},n)
     \end{array}\right) \approx
     \frac{n}{N} \Phi_v(\omega) \cdot
     \left(
       \begin{array}{cc}
         \Phi_u(\omega) & \Phi_{ue}(\omega) \\
         \Phi_{ue}(-\omega) & \lambda \\
       \end{array}
     \right)^{-1},
\end{align}
where $\Phi_{u}$ and $\Phi_v$ are respectively the input and noise spectrum, where the latter has a noise variance $\lambda$; $\Phi_{ue}$ is the cross spectrum between input $u$ and the noise signal $e$, and $\hat{G}_N$ and $\hat{H}_N$ are respectively the estimate of the true transfer functions $G_0$ and $H_0$.

\subsection{Parameter covariance matrix}
In a prediction error identification setting, for a parametric model $G(q,\theta)$, the covariance matrix $P_\theta$ is the inverse of the \textit{Fisher information matrix} $\mathcal{M}$ and reads: (\cite{ljung1998system}):
\begin{align} \label{eq:pthetapar}
  P_\theta =\sigma_e^2[E \psi(t,\theta) \psi(t,\theta)^\top]^{-1} = \sigma_e^2 \mathcal{M}^{-1},
\end{align}
where $\sigma_e^2$ is the noise variance, $E$ is the time expected value and $\psi(t,\theta)$ is the gradient of the one-step ahead prediction error $\epsilon(t/t-1,\theta)$ with respect to the parameters (i.e., the sensitivity of these errors to parameter variations):
\begin{align*}
  \psi(t,\theta) := \frac{\partial \epsilon(t/t-1,\theta)}{\partial \theta} = - \frac{\partial \hat{y}(t/t-1,\theta)}{\partial \theta},
\end{align*}
where $\hat{y}(t/t-1,\theta)$ is one-step ahead predictor.

In the next section, theoretical conditions are derived for the variance analysis of the full-MISO and immersed network approaches.

\section{Variance analysis} \label{sec:variance_analysis}
In a large network with hundreds of nodes, it is important to analyse which information (inputs) is most relevant for the estimation of a targeted module, e.g., $G_{j1}(q)$. We ponder the question: is it always better to consider \textit{all} possible information sources (inputs) in the modelling process of a targeted module or, instead, only a subset of all inputs? In other words,
while still guaranteeing consistency, can we, in the immersed
settings, provide a more precise estimate? We address this
question below, where we use the asymptotic transfer function variance expression as given in eq. \eqref{eq:tf_expression}.

\subsection{Variance analysis based on transfer function aymptotic variance expression } \label{sec:tf_variance}
We use the expression of variance as in eq. \eqref{eq:tf_expression} for the case of MISO system with $L_j$ inputs. Let the targeted module be $G_{j1}^0$ and define the $(L_j+1)\times 1$ vector of input signals
\begin{equation}
\label{eq:wvec}
x(t) = [w_{1}(t) \ldots e_j(t) \ldots w_{L_j+1}(t)]^{\top}.	
\end{equation}
%where $w_{j}(t)=e_j(t)$ is the white noise on the output of the to-be-identified targeted module $G_{j1}$.
Furthermore, let
\begin{equation}
	\label{eq:g_general}
	\mathcal{G}_{\gamma}=[\hat{G}_{j1}(e^{i\omega},n), H_{j}(e^{i\omega},n), \ldots, \hat{G}_{j\gamma}(e^{i\omega},n)]^{\top},
\end{equation}
where $\gamma\in\mathbb{N}_{\geq 1}$ indicates the number of inputs. Then the asymptotic transfer matrix covariance expression is given by:
\begin{align}\label{eq:tf_expression_MISO}
  \text{cov}\left(\mathcal{G}_{\gamma}\right) \approx
     \frac{n_{\gamma}}{N} \Phi_v(\omega) \cdot
\left[\sum_{\tau=-\infty}^{\infty}e^{-i\omega \tau }\bar{E} x(t) x^{T}(t-\tau)\right]^{-1},
\end{align}
where $\bar{E}$ is the expectation over time.
Using a block-matrix inverse, the covariance of the estimated transfer function of interest, $\hat{G}_{j1}(e^{i\omega},n)$, with $\gamma = L_j$ inputs is computed from (\ref{eq:tf_expression_MISO}) and reads:
\begin{align}\label{eq:covG_21_L}
  \text{cov}(\hat{G}_{j1}(e^{i\omega},n)) &\approx \frac{n_{L_j}}{N} \Phi_v(\omega) \cdot
  \left[
  \Phi_{w_{1}}(\omega) - \Gamma_{L_j} \Upsilon_{L_j}^{-1} \Gamma_{L_j}^{H}
  \right]^{-1},
\end{align}
where $(\cdot)H$ stands for the Hermitian conjugate,
\begin{align}
\label{eq:gamma}
  \Gamma_\gamma &= \left(\begin{array}{cccc}
    \Phi_{w_{(1)}e}(\omega)   & \Phi_{w_{(1)}w_{3}}(\omega)   & \cdots & \Phi_{w_{(1)}w_{\gamma}}(\omega)
  \end{array} \right),
 \end{align}
 and
\begin{align}
\label{eq:ups}
  \Upsilon_\gamma &=
  \left(
    \begin{array}{cccc}
      \lambda                 & \Phi_{e w_{3}}(\omega)      & \cdots & \Phi_{e w_{\gamma}}(\omega)\\
      \Phi_{e w_{3}}(-\omega) & \Phi_{w_{3}}(\omega)        & \cdots & \Phi_{w_{3}w_{\gamma}}(\omega)\\
      \vdots                  & \vdots                      & \ddots & \vdots \\
      \Phi_{e w_{\gamma}}(-\omega) & \Phi_{w_{3}w_{\gamma}}(-\omega)   & \cdots & \Phi_{w_{\gamma}w_{\gamma}}(\omega)
    \end{array}
  \right).
\end{align}
We can perform the same calculations for the immersed network that has a reduced set of $L_j-M$ inputs. Using the breve notation as applied in (\ref{eq:immersed_network_tf}), we define the vector of transfer functions $\breve{\mathcal{G}}_{\gamma}$ (cf. \ref{eq:g_general}). Expression (\ref{eq:tf_expression_MISO}) and (\ref{eq:covG_21_L}) then follow straightforwardly for the case of the immersed network, where the latter becomes:
\begin{align}\label{eq:covG_21_L-m}
  \text{cov}(\hat{G}_{j1}(e^{i\omega},n))\approx\,\,\,\,\,\,\,\,\,\,\,\,\,\,\,\,\,\,\,\,\,\,\,\,\,\,\,\,\,\,\,\,\,\,\,\,\,\,\,\,\,\,\,\,\,\,\,\,\,\,\,\,\,\,\,\,\,\,\,\,\,\,\,\,\,\,\,\,\,\,\,\,\,\,\,\,\,\,\,\,\,\,\,\,\,\,\,\,\,\,\,\,\,\,\,\,\,\,\,\,\,\, \notag\\
  \frac{n_{L_j-M}}{N} \Phi_{\breve{v}}(\omega) \cdot
  \left[
  \Phi_{w_{1}}(\omega) - \breve{\Gamma}_{L_j-M} \breve{\Upsilon}_{L_j-M}^{-1} \breve{\Gamma}_{L_j-M}^{H}
  \right]^{-1},
\end{align}
in which $\breve{\Gamma}$ and $\breve{\Upsilon}$ are the immersed network equivalents of respectively (\ref{eq:gamma}) and (\ref{eq:ups}) (the noise $e(t)$ is replaced with it's respective immersed network counterparts $\breve{e}(t)$).

\subsection{Main results}

\begin{thm} \label{th:tf_exp_1}
If
\begin{equation} \label{eq:th2_condition}
\resizebox{1\hsize}{!}{
  $\frac{\Phi_{\breve{v}}(\omega)}{\Phi_v(\omega)} >
   \frac{n_{L_j}\left[\Phi_{w_{(1)}}(\omega) - \breve{\Gamma}_{(L_j-M)}(\omega) \breve{\Upsilon}_{(L_j-M)}^{-1}(\omega) \breve{\Gamma}_{(L_j-M)}^H(\omega)\right]}{n_{L_j-M}\left[\Phi_{w_{(1)}}(\omega) - \Gamma_{L_j}(\omega) \Upsilon_{L_j}^{-1}(\omega) \Gamma_{L_j}^H(\omega)\right]}$}
\end{equation}

then $\text{cov}^{(L_j-M)}(\hat{G}_{j1}(e^{i\omega},n)) > \text{cov}^{(L_j)}(\hat{G}_{j1}(e^{i\omega},n))$.
\end{thm}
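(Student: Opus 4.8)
The plan is to reduce the claimed inequality between the two asymptotic covariance expressions to the single scalar inequality \eqref{eq:th2_condition} by an order-preserving algebraic rearrangement, the only preparatory work being to verify that every factor that appears is a strictly positive real number.

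First I would take the two expressions \eqref{eq:covG_21_L} and \eqref{eq:covG_21_L-m} as the starting point. Because the target module $G_{j1}^0$ is scalar, $\hat{G}_{j1}(e^{i\omega},n)$ is a scalar estimate and, for each fixed $\omega$, both right-hand sides are products of real scalars: the ratios $n_{L_j}/N$ and $n_{L_j-M}/N$ are positive; the noise spectra $\Phi_v(\omega)$ and $\Phi_{\breve{v}}(\omega)$ are strictly positive, being power spectral densities of the non-degenerate innovations processes $v_j$ and $\breve{v}_j$; and the bracketed terms $\Phi_{w_1}(\omega)-\Gamma_{L_j}\Upsilon_{L_j}^{-1}\Gamma_{L_j}^{H}$ and $\Phi_{w_1}(\omega)-\breve{\Gamma}_{L_j-M}\breve{\Upsilon}_{L_j-M}^{-1}\breve{\Gamma}_{L_j-M}^{H}$ are the Schur complements of the $(1,1)$ entry in the input spectral matrix $\Phi_x(\omega)=\sum_{\tau}e^{-i\omega\tau}\bar{E}x(t)x^{T}(t-\tau)$ and its immersed counterpart. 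Under the standing data-informativity assumption these spectral matrices are Hermitian and positive definite, hence their inverses are as well, so the $(1,1)$ entries of those inverses --- which are exactly the reciprocals of the two Schur complements appearing in \eqref{eq:covG_21_L}--\eqref{eq:covG_21_L-m} --- are strictly positive; in particular both Schur complements are strictly positive reals and $\Upsilon_{L_j}$, $\breve{\Upsilon}_{L_j-M}$ are invertible, which also legitimises the block-matrix inversion used to derive those two expressions.

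Next I would write the desired conclusion $\text{cov}^{(L_j-M)}(\hat{G}_{j1})>\text{cov}^{(L_j)}(\hat{G}_{j1})$, substitute \eqref{eq:covG_21_L} and \eqref{eq:covG_21_L-m}, cancel the common factor $1/N$, and then multiply both sides by the bracketed Schur-complement term coming from \eqref{eq:covG_21_L}, divide by $n_{L_j-M}$, and divide by $\Phi_v(\omega)$. Since each of these three quantities is strictly positive by the previous step, the direction of the inequality is preserved throughout, and after collecting the noise spectra on the left one obtains precisely \eqref{eq:th2_condition}. Reading this chain of implications in reverse gives the stated implication (in fact the two statements are equivalent).

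The only genuine obstacle is the positivity and invertibility bookkeeping of the middle paragraph: one must be explicit that data informativity forces $\Phi_x(\omega)$ (and its immersed analogue) to be positive definite, since that is what guarantees the Schur complements are positive, the $\Upsilon$-matrices invertible, and the block inversion valid. Once that is in place the argument is a routine, sign-safe manipulation of a scalar inequality; it is worth adding the remark that, as \eqref{eq:covG_21_L} and \eqref{eq:covG_21_L-m} are asymptotic in $n$ and $N$, the conclusion holds in that same asymptotic sense.
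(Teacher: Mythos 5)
Your proposal is correct and follows essentially the same route as the paper's appendix proof: substitute the two asymptotic covariance expressions, cancel/divide by the positive common factors, and rearrange into the scalar condition \eqref{eq:th2_condition}, noting the steps are reversible. Your explicit verification that the Schur complements and noise spectra are strictly positive (so the inequality direction is preserved) is a worthwhile addition that the paper leaves implicit, but it does not change the argument.
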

\begin{proof}
See Appendix.
\end{proof}

Theorem \ref{th:tf_exp_1} provides a result for each frequency in the frequency response of the targeted module.
%and can be considered as a strong condition. There is another way to formulate the above given conditions by integrating the spectrums:
%\begin{thm}
%If
%\begin{align}\label{eq:th2_condition}
%  \int \frac{\Phi_{\breve{v}}(\omega)}{\Phi_{v}(\omega)} d \omega>     \int
%   \frac{\Phi_{w_{(1)}}(\omega) - \breve{\Gamma}_{(L-M)} \breve{\Upsilon}_{(L-M)}^{-1} \breve{\Gamma}_{(L-M)}^H}{\Phi_{w_{(1)}}(\omega) - \Gamma_L \Upsilon_L^{-1} \Gamma_L^T} d \omega
%\end{align}
%then $\text{cov}(\hat{G}_{j1}) > \text{cov}(\hat{G}_{j1})$.
%\end{thm}
%\begin{proof}
%Straight forward and directly implies from Theorem \ref{th:tf_exp_1}.
%\end{proof}

\subsection{Variance analysis based on parameter covariance expression}
%The informativeness of an experiment can be expressed by a scalar measure that is based on the covariance matrix of the to-be-identified parameters. Common measures are: the sum of variances of the to-be-identified parameters, the inverse of the volume of the confidence ellipsoid of the to-be-identified parameters, and the largest principal axis thereof. These three examples are respectively known as the $A$-, $D$-, and $E$-optimality measures in Optimal Experiment Design literature.

As an alternative for the frequency-based variance analysis, we briefly cover variance analysis based on a scalar measure on the covariance matrix of the to-be-identified parameters. The covariance matrices of both original and immersed networks are respectively given by $P_{\alpha}^{L_j}$ and $P_{\alpha}^{L_j-m}$; see \eqref{eq:pthetapar}. Below, we will consider two useful measures: the $E$- and $D$-optimality measures.
%For convenience, we here define the difference of the inverse covariance matrices between both settings:
%\begin{align}\label{eq:delta_ptheta}
%\Delta P^{-1}_{\theta} &= (P_{\alpha}^L)^{-1} - (P_{\alpha}^{L-M})^{-1}.
%\end{align}
%Here, $(P_{\theta}^L)^{-1}$ and $(P_{\theta}^{L-M})^{-1}$ stand for respectively the inverse covariance matrix of the module $G_{j1}$ in the full-MISO and immersed network setting.
\subsubsection{E-optimality and D-optimality}
The $D$-optimality criterion measures the informativeness of an experiment based on the volume of the confidence ellipsoid that can be constructed with the covariance matrix of the to-be-identified parameters \eqref{eq:pthetapar}, i.e., when
\begin{equation}
	\det((P_{\theta}^{L_j})^{-1}) > \det((P_{\theta}^{L_j-M})^{-1}),
\end{equation}
the overall accuracy of the parameters in the module $G_{j1}$ is higher than those in the immersed network. $E$-optimality can indicate whether one ellipsoid lies completely inside the other. Mathematically, this conditions holds when
\begin{equation}
	(P_{\theta}^{L_j})^{-1}\succeq (P_{\theta}^{L_j-M})^{-1},
\end{equation}
%or, as shown in \cite{Max2016thesis},
%\begin{equation}
%	\lambda_{min}(\Delta P^{-1}) \geq 0,
%\end{equation}
%where $\Delta P^{-1}$ is given by \eqref{eq:delta_ptheta}.
Hence, contrasting the $D$-optimality measure, all variances of the to-be-identified parameters in the full-MISO approach are smaller than in its equivalent immersed network representation if the above condition is satisfied. %Geometrically, it means that the volume of the confidence ellipsoid based on the original parameter covariance matrix is smaller than its immersed counterpart. %However, this does not mean that all parameters in the full-MISO approach case have lower variances than their immersed counterparts. Indeed, two equally-sized (hyper)-ellipsoids may have the same volume but a different orientation and/or shape.

\section{Case Study} \label{sec:case_study}
To study the effect of the number and complexity of the to-be-estimated transfer functions and the poor signal-to-noise ratio on the variance of a targeted module, we consider a four-node dynamic network as shown in Fig. \ref{fig:example_network}.
\begin{figure} [h]
\center
    \includegraphics[width=3in]{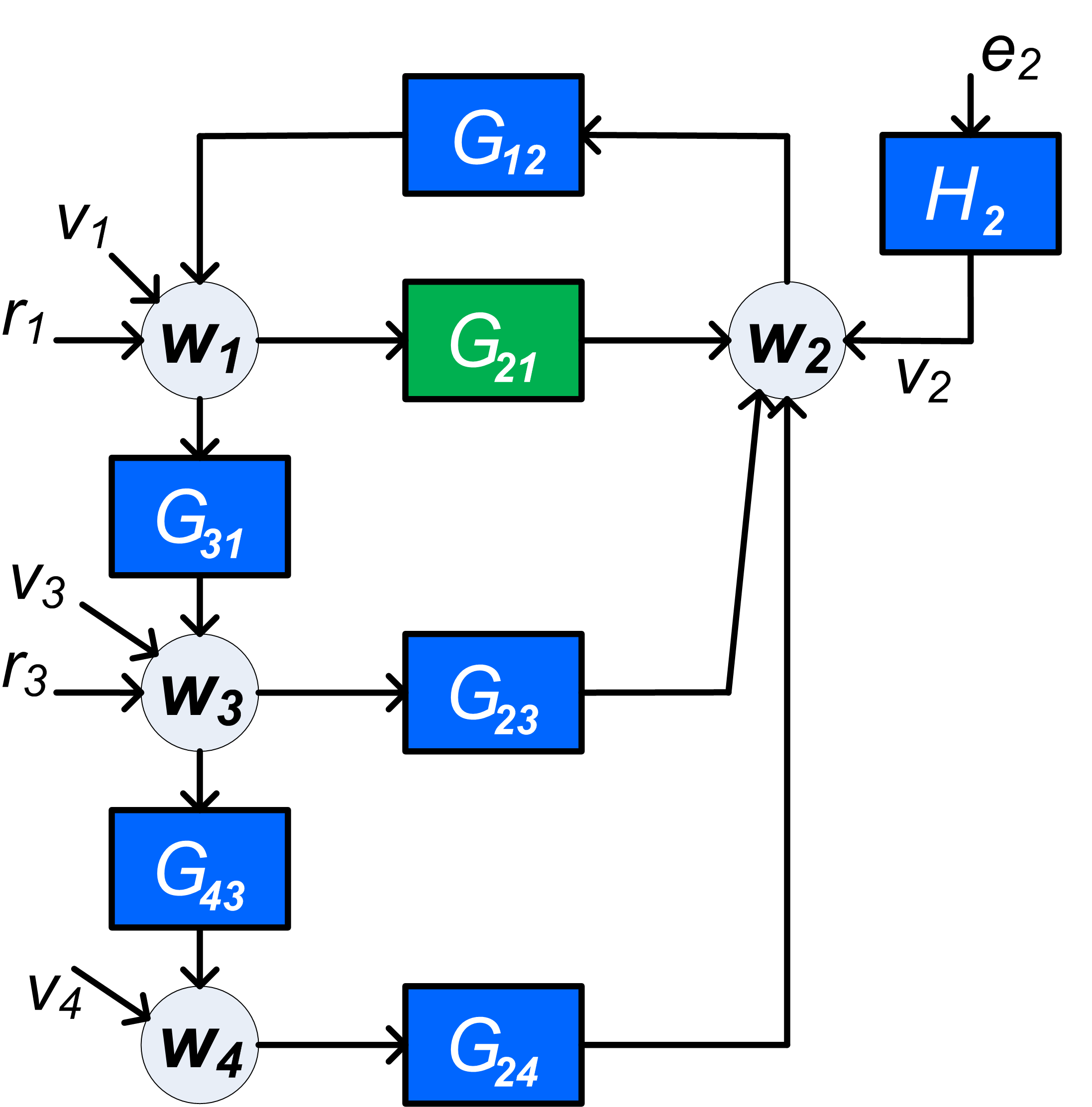}
  \caption{4-node dynamic network}  \label{fig:example_network}
\end{figure}
The dynamic network representation is given by:
\begin{align} \label{eq:general_equation}
w = G w + r + v,
\end{align}
In the matrix form,
\begin{align}
\left(
  \begin{array}{c}
    w_1 \\
    w_2 \\
    w_3 \\
    w_4 \\
  \end{array}
\right)
&=
\left(
  \begin{smallmatrix}
    0 & G_{12} & 0 & 0 \\
    G_{21} & 0 & G_{23} & G_{24} \\
    G_{31} & 0 & 0 & 0 \\
    0 & 0 &  G_{43} & 0 \\
 \end{smallmatrix}
\right)
\left(
  \begin{array}{c}
    w_1 \\
    w_2 \\
    w_3 \\
    w_4 \\
  \end{array}
\right)
+
\left(
  \begin{array}{c}
    r_1 \\
    0 \\
    r_3 \\
    0 \\
  \end{array}
\right)
+
\left(
  \begin{array}{c}
    v_1 \\
    v_2 \\
    v_3 \\
    v_4 \\
  \end{array}
\right)
\end{align}
In the example, the targeted module is $G_{21}$. For the direct identification method as explained in \cite{van2013identification}, the full-MISO approach requires $\{ w_1,w_3,w_4\}$ as predictor inputs to obtain a consistent estimate of $G_{21}$, as shown in the equation below.
\begin{align}
 w_2 = G_{21} w_1 + G_{23} w_3  + G_{24} w_4 + H_2 e.
\end{align}
Using the conditions presented in \cite{dankers2016identification}, a consistent estimate of $G_{21}$ is obtained with a reduced number of predictor inputs $\{ w_1,w_3\}$, i.e., $w_4$ can be removed. The immersed network is shown in Fig. \ref{fig:immersed_network}.
\begin{figure} [h]
\center
    \includegraphics[width=3in]{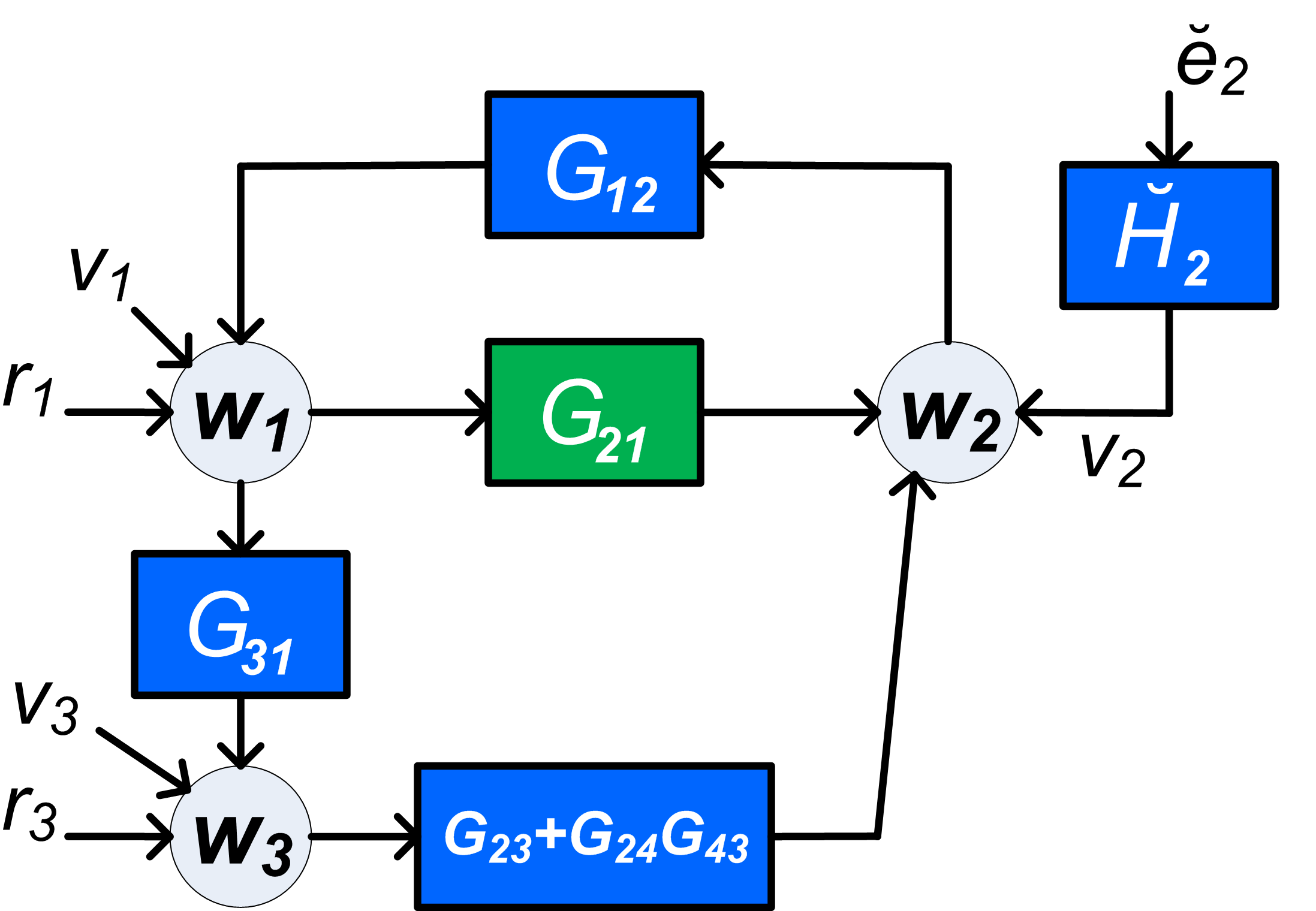}
  \caption{Immersed dynamic network with reduced number of nodes}  \label{fig:immersed_network}
\end{figure}
In the immersed network settings, the following equation is derived for the estimation of $G_{21}$:
\begin{align} \label{eq:immersed_network}
  w_2(t) &= G_{21}(q) w_1(t) + \breve{G}_{23}(q) w_3(t) +  \breve{H}_2(q) \breve{e}(t),
\end{align}
where $\breve{G}_{23} = G_{23} + G_{24} G_{43}$ and $\breve{H}_2$ can be obtained via a spectral decomposition, see e.g., \cite{dankers2016identification}.

Two aspects, i.e. poor signal-to-noise ratio and the number/complexity of the to-be-estimated transfer functions in the immersed network setting, affecting the quality of estimate of $G_{21}^0$ are discussed next:
\subsection{Poor signal-to-noise ratio in immersed network}
Process noise $v$ at input nodes can provide external excitation in dynamic network. For the considered example, the noise $v_4$ on the node signal $w_4$ can induce excitation when the node signal $w_4$ is considered in the set of predictor inputs (full-MISO case). In the immersed network settings, when $w_4$ is removed from the set of predictor inputs, the noise $v_4$ directly affects node $w_2$, causing a poor signal-to-noise ratio at the output. The important point to consider is that the transfer function $G_{24}$ determines how the noise $v_4$ affects output node $w_2$. A low gain $G_{24}$ results in a minimal effect while a high gain transfer function $G_{24}$ worsens the signal-to-noise ratio. Therefore, the gain of $G_{24}$ becomes an important variable to consider for the variance analysis of targeted module $G_{21}$.

\subsection{Number and/or complexity of to-be-estimated transfer functions in immersed network}
Another important aspect is the number/complexity of to-be-estimated transfer functions. It can be observed that in the immersed network, the number of transfer functions is reduced compared to full-MISO approach. In the considered example, only two transfer functions, $G_{21}$ and $\breve{G}_{23}$ are to be estimated. However, $\breve{G}_{23}$ can be more complex than $G_{23}$. For example, $\breve{G}_{23} = G_{23} + G_{24} G_{43}$ becomes complex if the number of parameters in $G_{43}$ are high or if there is no pole-zero cancellation in $G_{24}$ and $G_{43}$ and hence there are more to-be-estimated parameters in the immersed network settings compared to the full-MISO approach. To study this effect on the variance of the targeted module $G_{21}$, we consider $G_{43}$ with different orders.

In the following simulation example, we analyse the effect of i)- the gain of $G_{24}$ and ii)- the complexity of $G_{43}$ on the quality of estimation of $G_{21}$ in terms of variance.

\subsection{Simulation experiment}
For the variance analysis, we only consider the conditions derived for the asymptotic transfer function covariance expression and compare the results with the sample covariance measure. It would be attractive if the theoretical analysis could lead to generic conclusions but, due to the complexity of the problem at hand, it is difficult to achieve. As an alternative these theoretical results are used to a posteriori compare the variance expressions in a particular case. The simulation details and results are presented in the next subsections.

\subsection{Experiment details}
We have implemented the network shown in Fig. \ref{fig:example_network}. A white noise signal with power $0.1$ is used as external excitation signals, i.e., $\{r_1, r_3\}$ and is applied at nodes $1$ and $3$. All the white noises $e$ have variance $0.1$, i.e., $\lambda = 0.1$. The local module of interest is $G_{21}$. An (simulation) experiment is performed and the data set of length $N = 10^{4}$ with sampling time $T_{s} = 1$sec is collected. In the first modelling process setup, direct identification for dynamic networks (\cite{van2013identification} is performed using all inputs (full-MISO approach), i.e., $\{w_1,w_3,w_4\}$. A second modelling setup is considered with immersed network (\cite{dankers2016identification}) where the predictor inputs include $\{w_1,w_3\}$. Consistency results for both modellings setups are presented in \cite{van2013identification} and \cite{dankers2016identification}. A Box-Jenkins (BJ) model structure is used for the identification of each module $G_{21},G_{23},G_{43}, \breve{G}_{23}$ and noise filters $H_2, \breve{H}_2$ in both setups.

\subsection{Results}
Firstly, the transfer function $G_{43}$ is considered to have only one parameter, i.e., delay with a gain. One hundred Monte-Carlo simulations are performed for different gains of $G_{24}$, i.e., $\{0.005, 0.05, 0.5, 1\}$. Fig. \ref{fig:sample_cov_G_43_one_parameter} shows sample covariance per frequency over the $100$ Monte-Carlo simulations, where the sample covariance is defined as:
\begin{align*}
  \text{cov} (G_{j1}(e^{i\omega},n)  = \frac{1}{\mathbf{n}} \sum_{i=1}^{\mathbf{n}} ((G_{j1}^i(e^{i\omega},n) - E[(G_{j1}^i(e^{i\omega},n)])^2,
\end{align*}
where $\mathbf{n}$ is the number of Monte-Carlo simulations and $E$ represents the average value.
\begin{figure} [h]
\center
    \includegraphics[width=3in]{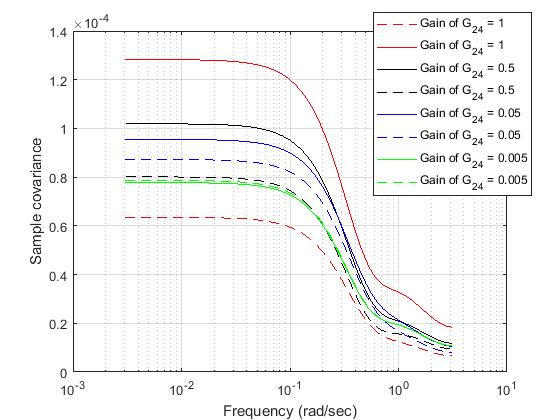}
  \caption{Sample covariance per frequency of $G_{21}$ when $G_{43}$ has only one parameter. In this case, the number of to-be-estimated parameters in the immersed $n_{L_j-M}$ and the full-MISO approaches $n_{L_j}$ are the same, dotted line shows full-MISO approach, solid line shows immersed network}  \label{fig:sample_cov_G_43_one_parameter}
\end{figure}

It can be observed in Fig. \ref{fig:sample_cov_G_43_one_parameter} that as the gain of $G_{24}$ is reduced the covariance of $G_{21}$ reduces, as expected. For $0.005$, the covariance of the immersed network becomes smaller compared to the full-MISO approach, resulting in a better estimate even with a reduced number of information sources (inputs) used in the estimation.

Fig. \ref{fig:sample_cov_G_43_two_parameter} shows the covariance per frequency when $G_{43}$ has two parameters. In this case, the complexity of the to-be-identified transfer functions in the immersed network settings is higher compared to the full-MISO approach. The first observation is that, as can also be observed in fig. \ref{fig:sample_cov_G_43_one_parameter}, the covariance of $G_{21}$ reduces with the gain of $G_{24}$.  Due to an increased complexity the covariance of $G_{21}$, in this case, is always higher in the immersed network settings irrespective of the gain of $G_{24}$.
\begin{figure} [h]
\center
    \includegraphics[width=3in]{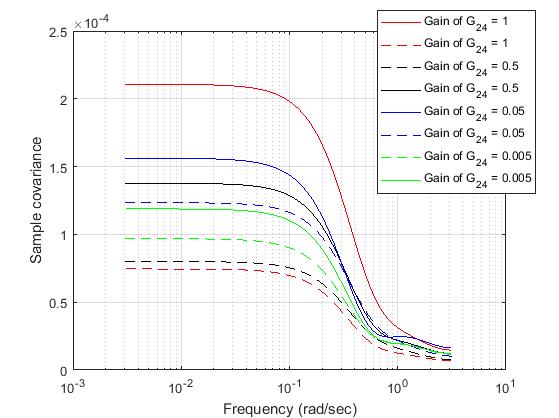}
  \caption{Sample covariance per frequency of $G_{21}$ when $G_{43}$ has two parameters. In this case, the number of to-be-estimated parameters in the immersed settings are more than the full-MISO approach, i.e., $n_{L_j-M} > n_{L_j}$, dotted line shows full-MISO approach, solid line shows immersed network}  \label{fig:sample_cov_G_43_two_parameter}
\end{figure}

The theoretical conditions using the asymptotic transfer function expression as derived in section \ref{sec:tf_variance} are also implemented. Fig \ref{fig:conditions_one_parameter} and \ref{fig:conditions_two_parameter} show the conditions when $G_{43}$ has one and two parameters respectively. For the case when $G_{43}$ has one parameter and with gain $0.005$ for $G_{24}$, the condition becomes negative, indicating that the covariance of full-MISO approach is higher compared to the immersed network settings. The theoretical results confirms the observation obtained via the sample covariance.
\begin{figure} [h]
\center
    \includegraphics[width=3in]{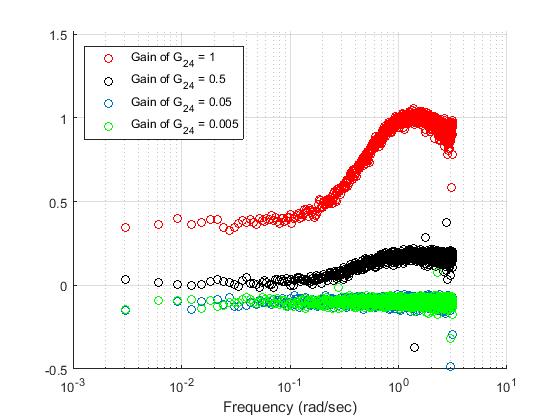}
  \caption{An implementation of the condition   $\frac{\Phi_{\breve{v}}(\omega)}{\Phi_v(\omega)} -
   \frac{n_L.\Phi_{w_{(1)}}(\omega) - \breve{\Gamma}_{(L_j-M)}(\omega) \breve{\Upsilon}_{(L_j-M)}^{-1}(\omega) \breve{\Gamma}_{(L_j-M)}^H(\omega)}{n_{L_j-M}.\Phi_{w_{(1)}}(\omega) - \Gamma_{L_j}(\omega) \Upsilon_{L_j}^{-1}(\omega) \Gamma_{L_j}^H(\omega)}$ per frequency, When $G_{43}$ has one parameter. According to theorem 1, if $> 0$, then it implies covariance of the immersed network is higher compared to the full-MISO approach}  \label{fig:conditions_one_parameter}
\end{figure}

\begin{figure} [h]
\center
    \includegraphics[width=3in]{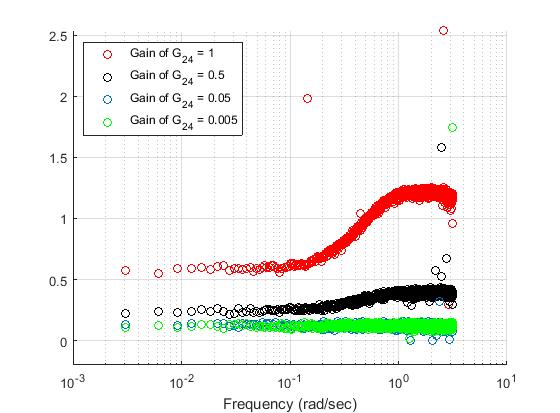}
  \caption{An implementation of the condition   $\frac{\Phi_{\breve{v}}(\omega)}{\Phi_v(\omega)} -
   \frac{n_L.\Phi_{w_{(1)}}(\omega) - \breve{\Gamma}_{(L_j-M)}(\omega) \breve{\Upsilon}_{(L_j-M)}^{-1}(\omega) \breve{\Gamma}_{(L_j-M)}^H(\omega)}{n_{L_j-M}.\Phi_{w_{(1)}}(\omega) - \Gamma_{L_j}(\omega) \Upsilon_{L_j}^{-1}(\omega) \Gamma_{L_j}^H(\omega)}$ per frequency, When $G_{43}$ has two parameter. According to theorem 1, if $> 0$, then it implies covariance of the immersed network is higher compared to the full-MISO approach}  \label{fig:conditions_two_parameter}
\end{figure}%
%\begin{figure}[h]
%\centering
%    \captionsetup{justification=centering}
%    \subfloat[\small{When $G_{43}$ has one parameter}]{
%        \includegraphics[width=0.5\linewidth]{Figure/conditions_one_parameter}
%    }
%    \subfloat[\small{When $G_{43}$ has two parameters}]{
%        \includegraphics[width=0.5\linewidth]{Figure/conditions_two_parameter}
%    }
%\captionsetup{justification=justified}
%\caption{An implementation of the condition   $\frac{\Phi_{\breve{v}}(\omega)}{\Phi_v(\omega)} -
%   \frac{n_L.\Phi_{w_{(1)}}(\omega) - \breve{\Gamma}_{(L-M)}(\omega) \breve{\Upsilon}_{(L-M)}^{-1}(\omega) \breve{\Gamma}_{(L-M)}^H(\omega)}{n_{L-M}.\Phi_{w_{(1)}}(\omega) - \Gamma_L(\omega) \Upsilon_L^{-1}(\omega) \Gamma_L^H(\omega)}$ per frequency, According to theorem 1, if $> 0$, then it implies covariance of the immersed network is higher compared to the full-MISO approach}
%\label{fig:tf_variance_condition}
%\end{figure}

\section{Conclusions}
We analysed the quality of a local module estimate in dynamic networks, in terms of its variance, for two different modelling setups: the full-MISO and immersed networks. The research question addresses the effect of using an additional input for the estimation problem. The variance analysis for a targeted module in dynamic networks is a complex problem where many factors, e.g. the dynamics of the targeted and other transfers in the network, different excitation conditions etc., can play an important role. We identify two dominant factors that affect the quality of estimate and it is shown that it is possible to obtain a better quality in terms of a reduced variance with smaller set of inputs compared to considering all inputs as predictor inputs.

%\begin{ack}
%This work has received funding from the European Research Council (ERC), Advanced Research Grant SYSDYNET, under the European Union’s Horizon 2020 research and innovation programme (grant agreement No 694504).
%\end{ack}

\bibliography{MyCollection}             % bib file to produce the

\begin{thebibliography}{15}
\providecommand{\natexlab}[1]{#1}
\providecommand{\url}[1]{\texttt{#1}}
\providecommand{\urlprefix}{URL }
\expandafter\ifx\csname urlstyle\endcsname\relax
  \providecommand{\doi}[1]{doi:\discretionary{}{}{}#1}\else
  \providecommand{\doi}{doi:\discretionary{}{}{}\begingroup
  \urlstyle{rm}\Url}\fi

\bibitem[{Chiuso and Pillonetto(2012)}]{chiuso2012bayesian}
Chiuso, A. and Pillonetto, G. (2012).
\newblock A bayesian approach to sparse dynamic network identification.
\newblock \emph{Automatica}, 48(8), 1553--1565.

\bibitem[{Dankers et~al.(2016)Dankers, Van~den Hof, Bombois, and
  Heuberger}]{dankers2016identification}
Dankers, A., Van~den Hof, P.M.J., Bombois, X., and Heuberger, P.S.C. (2016).
\newblock Identification of dynamic models in complex networks with prediction
  error methods: Predictor input selection.
\newblock \emph{IEEE Transactions on Automatic Control}, 61(4), 937--952.

\bibitem[{Everitt(2017)}]{everitt2017module}
Everitt, N. (2017).
\newblock \emph{Module identification in dynamic networks: parametric and
  empirical Bayes methods}.
\newblock Ph.D. thesis, KTH Royal Institute of Technology.

\bibitem[{Gevers et~al.(2006)Gevers, Mi{\v{s}}kovi{\'c}, Bonvin, and
  Karimi}]{gevers2006identification}
Gevers, M., Mi{\v{s}}kovi{\'c}, L., Bonvin, D., and Karimi, A. (2006).
\newblock Identification of multi-input systems: variance analysis and input
  design issues.
\newblock \emph{Automatica}, 42(4), 559--572.

\bibitem[{Gon{\c{c}}alves and Warnick(2008)}]{gonccalves2008necessary}
Gon{\c{c}}alves, J. and Warnick, S. (2008).
\newblock Necessary and sufficient conditions for dynamical structure
  reconstruction of {LTI} networks.
\newblock \emph{IEEE Transactions on Automatic Control}, 53(7), 1670--1674.

\bibitem[{Gunes et~al.(2014)Gunes, Dankers, and Van~den Hof}]{Gunes14}
Gunes, B., Dankers, A., and Van~den Hof, P. (2014).
\newblock A variance reduction technique for identification in dynamic
  networks.
\newblock In \emph{IFAC World Congress, Cape Town, South Africa}, 2842--2847.

\bibitem[{Hjalmarsson and Martensson(2011)}]{hjalmarsson2011geometric}
Hjalmarsson, H. and Martensson, J. (2011).
\newblock A geometric approach to variance analysis in system identification.
\newblock \emph{IEEE Transactions on Automatic Control}, 56(5), 983--997.

\bibitem[{Hjalmarsson and Ninness(2004)}]{hjalmarsson2004exact}
Hjalmarsson, H. and Ninness, B. (2004).
\newblock An exact finite sample variance expression for a class of frequency
  function estimates.
\newblock In \emph{Decision and Control, 2004. CDC. 43rd IEEE Conference on},
  volume~1, 370--375. IEEE.

\bibitem[{Ljung(1985)}]{ljung1985asymptotic}
Ljung, L. (1985).
\newblock Asymptotic variance expressions for identified black-box transfer
  function models.
\newblock \emph{IEEE transactions on Automatic Control}, 30(9), 834--844.

\bibitem[{Ljung(1999)}]{ljung1998system}
Ljung, L. (1999).
\newblock \emph{System Identification -- Theory for the User}.
\newblock Prentice-Hall.

\bibitem[{Materassi and Innocenti(2010)}]{materassi2010topological}
Materassi, D. and Innocenti, G. (2010).
\newblock Topological identification in networks of dynamical systems.
\newblock \emph{IEEE Transactions on Automatic Control}, 55(8), 1860--1871.

\bibitem[{Ninness and Hjalmarsson(2004)}]{ninness2004variance}
Ninness, B. and Hjalmarsson, H. (2004).
\newblock Variance error quantifications that are exact for finite-model order.
\newblock \emph{IEEE Transactions on Automatic Control}, 49(8), 1275--1291.

\bibitem[{Ramazi et~al.(2014)Ramazi, Hjalmarsson, and
  M{\aa}rtensson}]{ramazi2014variance}
Ramazi, P., Hjalmarsson, H., and M{\aa}rtensson, J. (2014).
\newblock Variance analysis of identified linear miso models having spatially
  correlated inputs, with application to parallel hammerstein models.
\newblock \emph{Automatica}, 50(6), 1675--1683.

\bibitem[{Van~den Hof et~al.(2013)Van~den Hof, Dankers, Heuberger, and
  Bombois}]{van2013identification}
Van~den Hof, P.M.J., Dankers, A., Heuberger, P.S.C., and Bombois, X. (2013).
\newblock Identification of dynamic models in complex networks with prediction
  error methods—basic methods for consistent module estimates.
\newblock \emph{Automatica}, 49(10), 2994--3006.

\bibitem[{Wahlberg et~al.(2009)Wahlberg, Hjalmarsson, and
  M{\aa}rtensson}]{wahlberg2009variance}
Wahlberg, B., Hjalmarsson, H., and M{\aa}rtensson, J. (2009).
\newblock Variance results for identification of cascade systems.
\newblock \emph{Automatica}, 45(6), 1443--1448.

\end{thebibliography}

\appendix
\section{Appendix}    % Each appendix must have a short title.

\subsubsection{Proof theorem \ref{th:tf_exp_1}:}
We need to show that:
\begin{align*}
 \text{cov}^{(L_j-M)}(\hat{G}_{j1}(e^{i\omega},n)) > \text{cov}^{(L_j)}(\hat{G}_{j1}(e^{i\omega},n)),
 \end{align*}
which can also be written as:
\begin{align}\label{eq:ineq2_to_proof}
  \text{cov}^{(L_j-M)}(\hat{G}_{j1}(e^{i\omega},n)) - \text{cov}^{(L_j)}(\hat{G}_{j1}(e^{i\omega},n)) > 0
\end{align}
Substituting eq. \eqref{eq:covG_21_L-m} and \eqref{eq:covG_21_L} in above eq. \eqref{eq:ineq2_to_proof}, we obtain
\begin{align*}
  \frac{n_{L_j-M}}{N} \Phi_{\breve{v}}(\omega) \cdot
  \left(
  \Phi_{w_{(1)}}(\omega) - \breve{\Gamma}_{L_j-M} \breve{\Upsilon}_{L_j-M}^{-1} \breve{\Gamma}_{L_j-M}^H
  \right)^{-1} -\\
  \frac{n_{L_j}}{N} \Phi_v^{(L_j)}(\omega) \cdot
  \left(
  \Phi_{w_{1}}(\omega) - \Gamma_{L_j} \Upsilon_{L_j}^{-1} \Gamma_{L_j}^H
  \right)^{-1}> 0.
\end{align*}
Dividing the above expression by $\frac{N}{ {\Phi_v^{(L_j)}(\omega)}}$, we obtain
\begin{align*}
    \frac{\Phi_{\breve{v}}(\omega)}{\Phi_v(\omega)}  \cdot
   n_{L_j-M}\left(
  \Phi_{w_{(1)}}(\omega) - \breve{\Gamma}_{L_j-M} \breve{\Upsilon}_{L_j-M}^{-1} \breve{\Gamma}_{L_j-M}^H
  \right)^{-1} -\\
     n_{L_j}\left(
  \Phi_{w_{(1)}}(\omega) - \Gamma_{L_j} \Upsilon_{L_j}^{-1} \Gamma_{L_j}^H
  \right)^{-1}> 0,
  \end{align*}
which can be rearranged into the inequality
\begin{equation}
\resizebox{1\hsize}{!}{
  $\frac{\Phi_{\breve{v}}(\omega)}{\Phi_v(\omega)} >
   \frac{n_{L_j}\left[\Phi_{w_{(1)}}(\omega) - \breve{\Gamma}_{(L_j-M)}(\omega) \breve{\Upsilon}_{(L_j-M)}^{-1}(\omega) \breve{\Gamma}_{(L_j-M)}^H(\omega)\right]}{n_{L_j-M}\left[\Phi_{w_{(1)}}(\omega) - \Gamma_{L_j}(\omega) \Upsilon_{L_j}^{-1}(\omega) \Gamma_{L_j}^H(\omega)\right]}$}
\end{equation}
%\end{proof}

%\section{Transfer functions used in simulation study}
%\begin{align*}
%  G_{21} =
%\end{align*}                                   % in the appendices.
\end{document}